\newtheorem{proposition}{Proposition}
\newacronym{mimo}{MIMO}{multiple-input multiple-output}
\newacronym{rf}{RF}{radio frequency}
\newacronym{awgn}{AWGN}{additive white Gaussian noise}
\newacronym{xpd}{XPD}{cross-polar discrimination}
\newacronym{iid}{i.i.d.}{independent and identically distributed}
\newacronym{los}{LoS}{line-of-sight}
\newacronym{tx}{Tx}{transmitter}
\newacronym{rx}{Rx}{receiver}
\begin{document}

\title{\huge Dual-Polarized Beyond Diagonal RIS}

\author{Matteo~Nerini,~\IEEEmembership{Member,~IEEE},
        Bruno~Clerckx,~\IEEEmembership{Fellow,~IEEE}

\thanks{This work has been supported in part by UKRI under Grant EP/Y004086/1, EP/X040569/1, EP/Y037197/1, EP/X04047X/1, EP/Y037243/1.}
\thanks{Matteo Nerini and Bruno Clerckx are with the Department of Electrical and Electronic Engineering, Imperial College London, SW7 2AZ London, U.K. (e-mail: m.nerini20@imperial.ac.uk; b.clerckx@imperial.ac.uk).}}

\maketitle

\begin{abstract}
Beyond diagonal reconfigurable intelligent surface (BD-RIS) is a family of RIS architectures more flexible than conventional RIS.
While BD-RIS has been primarily analyzed assuming uni-polarized systems, modern wireless deployments are dual-polarized.
To address this gap, this paper investigates the fundamental limits of dual-polarized BD-RIS-aided systems.
We derive the scaling laws governing the performance of BD-RIS and the Pareto frontier of the trade-off between performance and circuit complexity enabled by BD-RIS.
Theoretical results show that the group-connected RIS with group size 2 provides remarkable gains over conventional RIS in both Rayleigh and line-of-sight (LoS) channels, while maintaining a reduced circuit complexity.
\end{abstract}

\begin{IEEEkeywords}
Beyond diagonal RIS (BD-RIS), dual-polarization, reconfigurable intelligent surface (RIS).
\end{IEEEkeywords}

\section{Introduction}

Reconfigurable intelligent surface (RIS) is expected to play a key role in future wireless communications, allowing us to optimize the propagation environment through surfaces made of multiple elements with reconfigurable scattering properties \cite{wu21}.
Conventionally, a RIS has been implemented by independently controlling each element with a tunable load, leading to a diagonal phase shift matrix, formally known as scattering matrix.
Because of the structure of its scattering matrix, this conventional RIS architecture is referred to as diagonal RIS (D-RIS).
To improve the flexibility of D-RIS, more general RIS architectures have been proposed, under the name of beyond diagonal RIS (BD-RIS), which are characterized by scattering matrices not limited to being diagonal \cite{she22}.

Several BD-RIS architectures have been developed.
The fully-connected RIS has been proposed by interconnecting all the RIS elements to each other through tunable impedance components \cite{she22}.
This architecture has the maximum circuit complexity, in terms of the number of impedance components, and achieves the maximum performance \cite{she22}.
To balance performance and complexity, the group-connected RIS has been developed by dividing the RIS elements into groups and interconnecting only RIS elements within groups \cite{she22}.
In \cite{ner24-2}, the tree- and forest-connected RISs have been proposed, which are the least complex BD-RIS architectures achieving the same performance as fully- and group-connected RISs, respectively, in single-user systems.
The fundamental limit of the trade-off between performance and complexity offered by BD-RIS has been investigated in \cite{ner23}, where its Pareto frontier was analytically provided for single-user systems.
Recently, the Q-stem connected RIS has emerged to provide a favorable performance-complexity trade-off in multi-user systems \cite{zho24}.

The performance offered by BD-RIS has been analyzed in existing works by implicitly assuming uni-polarized systems.
However, modern \gls{mimo} wireless systems utilize dual-polarized antenna arrays to accommodate more antennas within limited space, and improve diversity using the polarization dimension \cite{kim10}.
Dual-polarized RIS-aided systems have been analyzed considering the conventional D-RIS architecture and compared to uni-polarized systems \cite{han22,mun24,zhe24}.
In addition, dual-polarized D-RIS has been proposed to perform broad beamforming \cite{ram23}, to enable holographic \gls{mimo} \cite{zen24}, and to implement massive \gls{mimo} transmission \cite{che21}.
Given the practical importance of dual-polarized RIS, in this study, we analyze the fundamental performance limits of dual-polarized BD-RIS.

The contributions of this study are as follows.
\textit{First}, we analyze the performance achieved by BD-RIS in dual-polarized RIS-aided systems.
Specifically, we derive the scaling laws of the received power considering Rayleigh and \gls{los} channels, and compare them with the scaling laws for uni-polarized systems.
\textit{Second}, we provide the gain of BD-RIS over D-RIS as a function of the inverse of the \gls{xpd}.
Although BD-RIS was known to provide no gain over D-RIS with LoS channels \cite{she22}, we analytically show that BD-RIS offers remarkable gains over D-RIS in dual-polarized systems with both Rayleigh and \gls{los} channels.
\textit{Third}, we analytically derive the Pareto frontier of the trade-off between performance and circuit complexity offered by BD-RIS in dual-polarized systems with \gls{los} channels, which differs from the one derived for uni-polarized systems in \cite{ner23}.
Although the tree-connected RIS achieves maximum performance under any channel realization with
the lowest circuit complexity, we prove that a lower-complexity BD-RIS is sufficient to get maximum performance in dual-polarized LoS systems, i.e., the group-connected RIS with group size 2.

\section{Dual-Polarized RIS-Aided System Model}

Consider a wireless system between a single-antenna \gls{tx} and a single-antenna \gls{rx}, e.g., an access point and a mobile terminal in cell-free \gls{mimo}, aided by a RIS with $N$ elements.
Denoting the transmitted signal as $x\in\mathbb{C}$, the received signal $y\in\mathbb{C}$ is given by $y=hx+n$, where $h\in\mathbb{C}$ is the wireless channel between \gls{tx} and \gls{rx} and $n\in\mathbb{C}$ is the noise.
The wireless channel $h$ is modeled as a function of the RIS scattering matrix $\boldsymbol{\Theta}\in\mathbb{C}^{N\times N}$ according to the widely used cascaded model $h=\mathbf{h}_{R}\boldsymbol{\Theta}\mathbf{h}_{T}$, where $\mathbf{h}_{R}\in\mathbb{C}^{1\times N}$ is the channel between RIS and \gls{rx}, $\mathbf{h}_{T}\in\mathbb{C}^{N\times 1}$ is the channel between \gls{tx} and RIS, and we neglect the direct channel between \gls{tx} and \gls{rx} \cite{wu21}.
Thus, the received power writes as $P_R=P_T\left\vert\mathbf{h}_{R}\boldsymbol{\Theta}\mathbf{h}_{T}\right\vert^2$,
where $P_T=\mathbb{E}[\vert x\vert]$ is the transmitted power.
We consider the BD-RIS to be lossless, realistic for half-wavelength antenna spacing.

\begin{figure}[t]
\centering
\includegraphics[width=0.4\textwidth]{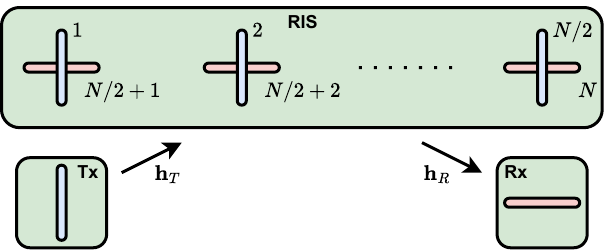}
\caption{Dual-polarized RIS-aided system where \gls{tx} and \gls{rx} have opposite polarization.}
\label{fig:system}
\end{figure}

We assume this RIS-aided system to be dual-polarized, where the RIS elements, the transmitting antenna, and the receiving antenna are either vertically or horizontally polarized.
Specifically, we assume $N/2$ RIS elements to have vertical polarization and $N/2$ RIS elements to have horizontal polarization, as commonly considered in dual-polarized systems \cite{kim10}.
With no loss of generality, we order the RIS elements such that the first $N/2$ are vertically polarized and the last $N/2$ are horizontally polarized, as shown in Fig.~\ref{fig:system}.
The effect of the antenna polarization is accounted for in the channels $\mathbf{h}_{R}$ and $\mathbf{h}_{T}$, by expressing them as $\mathbf{h}_{R}=\mathbf{p}_{R}\odot\tilde{\mathbf{h}}_{R}$, and $\mathbf{h}_{T}=\mathbf{p}_{T}\odot\tilde{\mathbf{h}}_{T}$, where $\odot$ denotes the Hadamard product, $\mathbf{p}_{R}\in\mathbb{R}^{1\times N}$ and $\mathbf{p}_{T}\in\mathbb{R}^{N\times 1}$ include the effects of the power imbalance between the different polarizations, and $\tilde{\mathbf{h}}_{R}\in\mathbb{C}^{1\times N}$ and $\tilde{\mathbf{h}}_{T}\in\mathbb{C}^{N\times 1}$ are the uni-polarized fading channels \cite{kim10}.
The vectors $\mathbf{p}_{R}$ and $\mathbf{p}_{T}$ are modeled depending on the polarizations of the \gls{tx} and \gls{rx}.
When the \gls{rx} is vertically or horizontally polarized, we have $\mathbf{p}_{R}=[1, \sqrt{\chi}]\otimes\mathbf{1}_{1\times N/2}$ or $\mathbf{p}_{R}=[\sqrt{\chi}, 1]\otimes\mathbf{1}_{1\times N/2}$, respectively, where $\otimes$ denotes the Kronecker product and $0\leq\chi\leq1$ is the inverse of the \gls{xpd} \cite{kim10}.
Similarly, when the \gls{tx} is vertically or horizontally polarized, we have
$\mathbf{p}_{T}=[1, \sqrt{\chi}]^T\otimes\mathbf{1}_{N/2\times 1}$ or $\mathbf{p}_{T}=[\sqrt{\chi}, 1]^T\otimes\mathbf{1}_{N/2\times 1}$, respectively.
The fading channels $\tilde{\mathbf{h}}_{R}$ and $\tilde{\mathbf{h}}_{T}$ are modeled in this study considering two popular fading distributions, namely \gls{iid} Rayleigh channels $\tilde{\mathbf{h}}_{R}\sim\mathcal{CN}(\mathbf{0},\mathbf{I})$ and $\tilde{\mathbf{h}}_{T}\sim\mathcal{CN}(\mathbf{0},\mathbf{I})$, and \gls{los} channels $\tilde{\mathbf{h}}_{R}=[e^{j\varphi_{R,1}},\ldots,e^{j\varphi_{R,N}}]$ and $\tilde{\mathbf{h}}_{T}=[e^{j\varphi_{T,1}},\ldots,e^{j\varphi_{T,N}}]^T$, where $\varphi_{R,n},\varphi_{T,n}\in[0,2\pi]$, for $n=1,\ldots,N$.

\begin{table*}[t]
\centering
\caption{Scaling laws of the received power and BD-RIS gain in uni- and dual-polarized RIS-aided systems.}
\begin{tabular}{c|c|cc|}
\cline{2-4}
& \multirow{2}{*}{Uni-polarized system} & \multicolumn{2}{c|}{Dual-polarized system}\\
\cline{3-4}
& & \multicolumn{1}{c|}{Tx and Rx have the same polarization} & Tx and Rx have opposite polarization\\
\hline
\multicolumn{1}{|c|}{Rayleigh} &
\begin{tabular}[c]{@{}c@{}}
$\bar{P}_R^{\mathrm{Single}}
=N+N\left(N-1\right)\frac{\pi^2}{16}$,\\
$\bar{P}_R^{\mathrm{Fully}}
=N^2$, $G=\frac{16}{\pi^2}$.
\end{tabular} &
\multicolumn{1}{c|}{\begin{tabular}[c]{@{}c@{}}
$\bar{P}_R^{\mathrm{Single}}
=\frac{1+\chi^2}{2}\left(N+N\left(\frac{N}{2}-1\right)\frac{\pi^2}{16}\right)+\frac{\pi^2\chi}{32}N^2$,\\
$\bar{P}_R^{\mathrm{Fully}}
=\frac{\left(1+\chi\right)^2}{4}N^2$, $G=\frac{16}{\pi^2}$.
\end{tabular}} &
\begin{tabular}[c]{@{}c@{}}
$\bar{P}_R^{\mathrm{Single}}
=\chi\left(N+N\left(N-1\right)\frac{\pi^2}{16}\right)$,\\
$\bar{P}_R^{\mathrm{Fully}}
=\frac{\left(1+\chi\right)^2}{4}N^2$, $G=\frac{4(1+\chi)^2}{\pi^2\chi}$.
\end{tabular}\\
\hline
\multicolumn{1}{|c|}{LoS} &
\begin{tabular}[c]{@{}c@{}}
$P_R^{\mathrm{Single}}=N^2$,\\
$P_R^{\mathrm{Fully}}=N^2$, $G=1$.
\end{tabular} &
\multicolumn{1}{c|}{\begin{tabular}[c]{@{}c@{}}
$P_R^{\mathrm{Single}}=\frac{\left(1+\chi\right)^2}{4}N^2$,\\
$P_R^{\mathrm{Fully}}=\frac{\left(1+\chi\right)^2}{4}N^2$, $G=1$.
\end{tabular}} & 
\begin{tabular}[c]{@{}c@{}}
$P_R^{\mathrm{Single}}
=\chi N^2$,\\
$P_R^{\mathrm{Fully}}
=\frac{\left(1+\chi\right)^2}{4}N^2$, $G=\frac{\left(1+\chi\right)^2}{4\chi}$.
\end{tabular}\\
\hline
\end{tabular}
\label{tab}
\end{table*}

\section{Received Power Scaling Laws and BD-RIS Gain}

In this section, we derive the scaling laws of the received power in the case of a D-RIS and a BD-RIS in a dual-polarized system.
Considering $P_T=1$ for simplicity, a D-RIS, also known as single-connected RIS \cite{she22}, achieves a received power
\begin{equation}
P_R^{\mathrm{Single}}=\left(\sum_{n=1}^N\left\vert\left[\mathbf{h}_{R}\right]_{n}\left[\mathbf{h}_{T}\right]_{n}\right\vert\right)^2,\label{eq:PR-single}
\end{equation}
while a BD-RIS can achieve a received power
\begin{equation}
P_{R}^{\mathrm{Fully}}=\left\Vert \mathbf{h}_{R}\right\Vert^2\left\Vert\mathbf{h}_{T}\right\Vert^{2},\label{eq:PR-fully}
\end{equation}
when the fully-connected RIS is used \cite{she22}.
We consider the received power achieved by a fully-connected RIS as this is the BD-RIS architecture enabling maximum performance.
Since the fully-connected RIS is more flexible than the single-connected RIS, we have $P_{R}^{\mathrm{Fully}}\geq P_R^{\mathrm{Single}}$, as it can be shown with the Cauchy–Schwarz inequality \cite{she22}.
In addition, we investigate the gain of BD-RIS over D-RIS, introduced as
\begin{equation}
G=\lim_{N\to+\infty}\frac{\bar{P}_R^{\mathrm{Fully}}}{\bar{P}_R^{\mathrm{Single}}},\label{eq:gain}
\end{equation}
where $\bar{P}_R^{\mathrm{Fully}}=\mathbb{E}[P_R^{\mathrm{Fully}}]$ and $\bar{P}_R^{\mathrm{Single}}=\mathbb{E}[P_R^{\mathrm{Single}}]$ are the received powers averaged across the channel realizations.

We study two scenarios, namely where the \gls{tx} and \gls{rx} have the same polarization and opposite polarization, as these are the two extreme cases that can occur due to \gls{tx} or \gls{rx} mobility.
For each of the two scenarios, we consider Rayleigh and \gls{los} fading channels, giving a total of four scenarios.

\subsection{Same Polarization at Tx and Rx with Rayleigh Channels}
\label{sec:same-Rayleigh}

Consider a dual-polarized RIS-aided system where the \gls{tx} and \gls{rx} have the same polarization (assumed to be vertical with no loss of generality), and the channels are \gls{iid} Rayleigh distributed.
For a single-connected RIS, since
$\vert[\mathbf{h}_{R}]_{n}[\mathbf{h}_{T}]_{n}\vert=\vert[\tilde{\mathbf{h}}_{R}]_{n}[\tilde{\mathbf{h}}_{T}]_{n}\vert$, for $n=1,\ldots,N/2$, and $\vert[\mathbf{h}_{R}]_{n}[\mathbf{h}_{T}]_{n}\vert=\chi\vert[\tilde{\mathbf{h}}_{R}]_{n}[\tilde{\mathbf{h}}_{T}]_{n}\vert$, for $n=N/2+1,\ldots,N$, \eqref{eq:PR-single} becomes
\begin{equation}
P_R^{\mathrm{Single}}
=\left(\sum_{n=1}^{N/2}\left\vert[\tilde{\mathbf{h}}_{R}]_{n}[\tilde{\mathbf{h}}_{T}]_{n}\right\vert
+\chi\sum_{n=N/2+1}^{N}\left\vert[\tilde{\mathbf{h}}_{R}]_{n}[\tilde{\mathbf{h}}_{T}]_{n}\right\vert\right)^2.\label{eq:single-same-ray0}
\end{equation}
Taking the expectation of \eqref{eq:single-same-ray0} by expanding the square and following the steps in \cite[Section~IV-E]{she22}, we get
\begin{equation}
\bar{P}_R^{\mathrm{Single}}=\frac{1+\chi^2}{2}\left(N+N\left(\frac{N}{2}-1\right)\frac{\pi^2}{16}\right)+\frac{\pi^2\chi}{32}N^2,\label{eq:single-same-ray}
\end{equation}
where we exploited $\mathbb{E}[\vert[\tilde{\mathbf{h}}_{R}]_{n}\vert]=\mathbb{E}[\vert[\tilde{\mathbf{h}}_{T}]_{n}\vert]=\sqrt{\pi}/2$ and $\mathbb{E}[\vert[\tilde{\mathbf{h}}_{R}]_{n}\vert^2]=\mathbb{E}[\vert[\tilde{\mathbf{h}}_{T}]_{n}\vert^2]=1$, for $n=1,\ldots,N$.
The received power $\bar{P}_R^{\mathrm{Single}}$ in \eqref{eq:single-same-ray} grows with $\chi$, spanning from $\bar{P}_R^{\mathrm{Single}}=N/2+N/2(N/2-1)\pi^2/16$ when $\chi=0$ to $\bar{P}_R^{\mathrm{Single}}=N+N(N-1)\pi^2/16$ when $\chi=1$.
Besides, for a fully-connected RIS, \eqref{eq:PR-fully} can be rewritten as
\begin{multline}
P_R^{\mathrm{Fully}}
=     \left(\left\Vert[\tilde{\mathbf{h}}_{R}]_{1:N/2}\right\Vert^2+\chi\left\Vert[\tilde{\mathbf{h}}_{R}]_{N/2+1:N}\right\Vert^2\right)\\
\times\left(\left\Vert[\tilde{\mathbf{h}}_{T}]_{1:N/2}\right\Vert^2+\chi\left\Vert[\tilde{\mathbf{h}}_{T}]_{N/2+1:N}\right\Vert^2\right),\label{eq:fully-same-ray0}
\end{multline}
since $\Vert\mathbf{h}_{R}\Vert^2=\Vert[\tilde{\mathbf{h}}_{R}]_{1:N/2}\Vert^2+\chi\Vert[\tilde{\mathbf{h}}_{R}]_{N/2+1:N}\Vert^2$ and $\Vert\mathbf{h}_{T}\Vert^2=\Vert[\tilde{\mathbf{h}}_{T}]_{1:N/2}\Vert^2+\chi\Vert[\tilde{\mathbf{h}}_{T}]_{N/2+1:N}\Vert^2$.
Thus, by exploiting the independence between $\tilde{\mathbf{h}}_{R}$ and $\tilde{\mathbf{h}}_{T}$, and that $\mathbb{E}[\Vert[\tilde{\mathbf{h}}_{R}]_{1:N/2}\Vert^2]=\mathbb{E}[\Vert[\tilde{\mathbf{h}}_{R}]_{N/2+1:N}\Vert^2]=\mathbb{E}[\Vert[\tilde{\mathbf{h}}_{T}]_{1:N/2}\Vert^2]=\mathbb{E}[\Vert[\tilde{\mathbf{h}}_{T}]_{N/2+1:N}\Vert^2]=N/2$, the expectation of \eqref{eq:fully-same-ray0} reads as
\begin{equation}
\bar{P}_R^{\mathrm{Fully}}=\left(\frac{N}{2}+\chi\frac{N}{2}\right)^2=\frac{\left(1+\chi\right)^2}{4}N^2,\label{eq:fully-same-ray}
\end{equation}
increasing with $\chi$.
Specifically, we have $\bar{P}_R^{\mathrm{Fully}}=(N/2)^2$ when $\chi=0$ and $\bar{P}_R^{\mathrm{Fully}}=N^2$ when $\chi=1$.
Note that \eqref{eq:single-same-ray} and \eqref{eq:fully-same-ray} with $\chi=1$ correspond to the scaling laws for uni-polarized systems \cite{she22}, while with $\chi=0$ they correspond to the same scaling laws of a RIS with $N/2$ elements, since only the $N/2$ vertically polarized elements can be effectively used.
By substituting \eqref{eq:single-same-ray} and \eqref{eq:fully-same-ray} into \eqref{eq:gain}, we obtain $G=16/\pi^2\approx1.62$.
The gain $G$ is independent of $\chi$ as it is computed for $N\rightarrow\infty$, and effectively using $N$ or $N/2$ RIS elements, in the extreme cases of $\chi=1$ and $\chi=0$, respectively, does not impact $G$.

\subsection{Same Polarization at Tx and Rx with \gls{los} Channels}

Consider a dual-polarized RIS-aided system where the \gls{tx} and \gls{rx} have the same polarization (assumed to be vertical), and the channels are \gls{los}.
With \gls{los} channels, we have $\vert[\mathbf{h}_{R}]_{n}[\mathbf{h}_{T}]_{n}\vert=1$, for $n=1,\ldots,N/2$, $\vert[\mathbf{h}_{R}]_{n}[\mathbf{h}_{T}]_{n}\vert=\chi$, for $n=N/2+1,\ldots,N$, and $\Vert\mathbf{h}_{R}\Vert^2=\Vert\mathbf{h}_{T}\Vert^2=N/2+\chi N/2$.
Thus, both \eqref{eq:PR-single} and \eqref{eq:PR-fully} boil down to
\begin{equation}
P_R^{\mathrm{Single}}=P_R^{\mathrm{Fully}}=\frac{(1+\chi)^2}{4}N^2,\label{eq:same-los}
\end{equation}
giving the received power achieved by both single- and fully-connected RIS.
Consequently, the gain of BD-RIS over D-RIS is $G=1$, independently of the value of $\chi$.

\subsection{Opposite Polarization at Tx and Rx with Rayleigh Channels}

Consider the \gls{tx} and \gls{rx} to have opposite polarization (vertically and horizontally polarized, respectively, as in Fig.~\ref{fig:system}, with no loss of generality), and Rayleigh distributed channels.
By noticing that $\vert[\mathbf{h}_{R}]_{n}[\mathbf{h}_{T}]_{n}\vert=\sqrt{\chi}\vert[\tilde{\mathbf{h}}_{R}]_{n}[\tilde{\mathbf{h}}_{T}]_{n}\vert$, for $n=1,\ldots,N$, \eqref{eq:PR-single} can be rewritten as
\begin{equation}
P_R^{\mathrm{Single}}=\chi\left(\sum_{n=1}^N\left\vert[\tilde{\mathbf{h}}_{R}]_{n}[\tilde{\mathbf{h}}_{T}]_{n}\right\vert\right)^2,
\end{equation}
whose expectation is given by
\begin{equation}
\bar{P}_R^{\mathrm{Single}}=\chi\left(N+N\left(N-1\right)\frac{\pi^2}{16}\right),\label{eq:single-oppo-ray}
\end{equation}
following the steps in \cite[Section~IV-E]{she22}.
Clearly, the scaling law in \eqref{eq:single-oppo-ray} increases with $\chi$, being $\bar{P}_R^{\mathrm{Single}}=0$ when $\chi=0$ and $\bar{P}_R^{\mathrm{Single}}=N+N(N-1)\pi^2/16$ when $\chi=1$.
Besides, the fully-connected RIS achieves a received power of
\begin{equation}
\bar{P}_R^{\mathrm{Fully}}=\frac{\left(1+\chi\right)^2}{4}N^2,\label{eq:fully-oppo-ray}
\end{equation}
following the discussion in Section~\ref{sec:same-Rayleigh}.
The gain of BD-RIS over D-RIS is given by substituting \eqref{eq:single-oppo-ray} and \eqref{eq:fully-oppo-ray} into \eqref{eq:gain} as
\begin{equation}
G=\frac{4\left(1+\chi\right)^2}{\pi^2\chi},\label{eq:gain-oppo-ray}
\end{equation}
which decreases with $\chi$.
Interestingly, we have $G=+\infty$ when $\chi=0$ and $G=16/\pi^2\approx1.62$ when $\chi=1$, showing that the gain is significant for small values of $\chi$.

\subsection{Opposite Polarization at Tx and Rx with LoS Channels}

Consider the \gls{tx} and \gls{rx} to have opposite polarization (vertically and horizontally polarized, respectively), and the channels to be \gls{los}.
With \gls{los} channels, $\vert[\mathbf{h}_{R}]_{n}[\mathbf{h}_{T}]_{n}\vert=\sqrt{\chi}$, for $n=1,\ldots,N$.
Thus, \eqref{eq:PR-single} simplifies to
\begin{equation}
P_R^{\mathrm{Single}}
=\left(\sum_{n=1}^N\sqrt{\chi}\right)^2
=\chi N^2,\label{eq:single-oppo-los}
\end{equation}
increasing with $\chi$ and ranging from $P_R^{\mathrm{Single}}=0$ to $P_R^{\mathrm{Single}}=N^2$.
For a fully-connected RIS, \eqref{eq:PR-fully} boils down to
\begin{equation}
P_R^{\mathrm{Fully}}=\frac{\left(1+\chi\right)^2}{4}N^2,\label{eq:fully-oppo-los}
\end{equation}
since $\Vert\mathbf{h}_{R}\Vert^2=\Vert\mathbf{h}_{T}\Vert^2=N/2+\chi N/2$.
By substituting \eqref{eq:single-oppo-los} and \eqref{eq:fully-oppo-los} into \eqref{eq:gain}, we obtain
\begin{equation}
G=\frac{\left(1+\chi\right)^2}{4\chi},\label{eq:gain-oppo-los}
\end{equation}
decreasing with $\chi$.
We observe that $G=+\infty$ when $\chi=0$ and $G=1$ when $\chi=1$, showing remarkable gains of BD-RIS over D-RIS also with \gls{los} channels, particularly with small $\chi$.


We summarize our analytical findings for the four scenarios in Tab.~\ref{tab}, where we compare them with the performance in the case of a uni-polarized system provided in \cite{she22}.
We report the gain of BD-RIS over D-RIS $G$ in the four considered scenarios as a function of $\chi$ in Fig.~\ref{fig:gain}.
Interestingly, BD-RIS can offer a gain $G>1$ over D-RIS also with \gls{los} channels when considering a dual-polarized system.
This gain is particularly high for small values of $\chi$.
For example, we have $G=3$ with \gls{los} channels when the \gls{tx} and \gls{rx} have opposite polarization and $\chi=0.1$.
Additional numerical simulations show that the gain $G$ under Rician channels increases as the Rician factor decreases and always remains between the gains under LoS and Rayleigh channels shown in Fig.~\ref{fig:gain}.

\begin{figure}[t]
\centering
\includegraphics[width=0.36\textwidth]{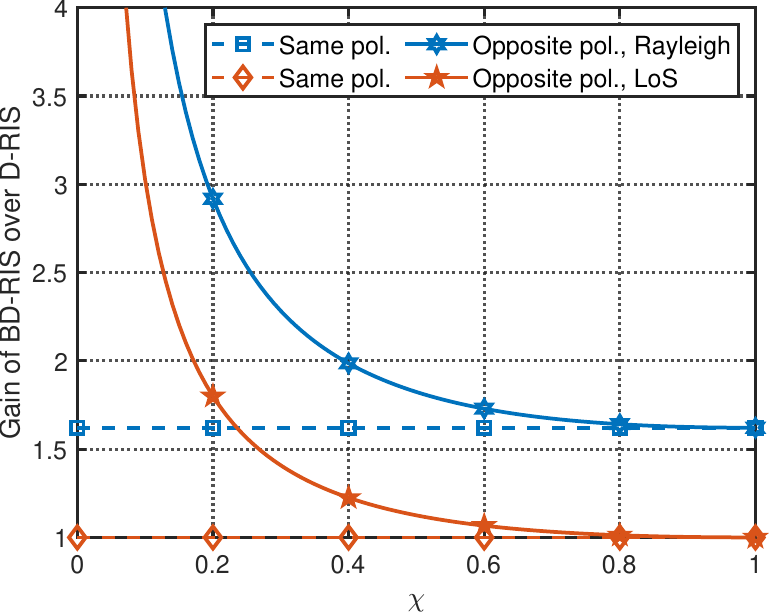}
\caption{Gain of BD-RIS over D-RIS $G$ as a function of $\chi$, when the \gls{tx} and \gls{rx} have the same/opposite polarization, and with Rayleigh/\gls{los} channels.}
\label{fig:gain}
\end{figure}

\section{Optimal BD-RISs with LoS Channels}

We have shown that BD-RIS is beneficial over D-RIS also under \gls{los} channels, in the case of a dual-polarized wireless system where the \gls{tx} and \gls{rx} have opposite polarization.
Focusing on this scenario, unexplored in previous work, we characterize the optimal BD-RIS architectures that achieve the Pareto frontier of the performance-complexity trade-off, where the performance is the received power $P_R$ achievable by the BD-RIS and the complexity as the number of tunable impedance components $C$ in the BD-RIS.
We begin by introducing a low-complex BD-RIS architecture that can achieve the same performance as the fully-connected RIS.

\begin{proposition}
Consider a dual-polarized RIS-aided system with \gls{tx} and \gls{rx} having opposite polarization and \gls{los} channels.
In this system, any group-connected RIS with group size 2 where each group contains two RIS elements with opposite polarization achieves the performance upper bound $P_R^{\mathrm{Fully}}$ in \eqref{eq:fully-oppo-los} for any value of $\chi$.
\label{pro:1}
\end{proposition}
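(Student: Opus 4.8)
The plan is to compute the received power achievable by a group-connected RIS with group size 2, where each group pairs one vertically polarized element (indexed $n\in\{1,\ldots,N/2\}$) with one horizontally polarized element (indexed $\sigma(n)\in\{N/2+1,\ldots,N\}$ for some bijection $\sigma$), and show it equals $\frac{(1+\chi)^2}{4}N^2$. Since a group-connected RIS with group size 2 has a block-diagonal scattering matrix $\boldsymbol{\Theta}$ with $N/2$ blocks of size $2\times2$, and each block $\boldsymbol{\Theta}_g$ may be any $2\times2$ unitary (symmetric, for reciprocal RIS) matrix, the received power decomposes as a sum over groups. Concretely, reordering channel entries so that group $g$ sees receive sub-channel $\mathbf{h}_{R,g}\in\mathbb{C}^{1\times2}$ and transmit sub-channel $\mathbf{h}_{T,g}\in\mathbb{C}^{2\times1}$, we have $h=\sum_{g=1}^{N/2}\mathbf{h}_{R,g}\boldsymbol{\Theta}_g\mathbf{h}_{T,g}$, and each term can be independently phase-aligned, so the optimal $|h|=\sum_g\max_{\boldsymbol{\Theta}_g}|\mathbf{h}_{R,g}\boldsymbol{\Theta}_g\mathbf{h}_{T,g}|$.

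First I would invoke the known single-group optimum: for a $2\times2$ group with arbitrary unitary symmetric $\boldsymbol{\Theta}_g$, the maximum of $|\mathbf{h}_{R,g}\boldsymbol{\Theta}_g\mathbf{h}_{T,g}|$ equals $\|\mathbf{h}_{R,g}\|\,\|\mathbf{h}_{T,g}\|$ — this is exactly the fully-connected optimum applied to a size-2 system, and it is attainable (the fully-connected result in \eqref{eq:PR-fully} holds for any $N$, in particular $N=2$). Next I would evaluate the norms in the LoS, opposite-polarization case: for group $g$ containing one V element and one H element, with $\mathbf{p}_R=[\sqrt{\chi},1]$-type weighting on the receive side and $\mathbf{p}_T=[1,\sqrt{\chi}]$-type on the transmit side, the V element contributes $|[\mathbf{h}_R]|^2=\chi$, $|[\mathbf{h}_T]|^2=1$ while the H element contributes $|[\mathbf{h}_R]|^2=1$, $|[\mathbf{h}_T]|^2=\chi$ (using $|[\tilde{\mathbf{h}}_R]_n|=|[\tilde{\mathbf{h}}_T]_n|=1$). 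Hence $\|\mathbf{h}_{R,g}\|^2=1+\chi$ and $\|\mathbf{h}_{T,g}\|^2=1+\chi$ for every group, giving $\max|\mathbf{h}_{R,g}\boldsymbol{\Theta}_g\mathbf{h}_{T,g}|=1+\chi$. Summing over the $N/2$ groups yields $|h|=\frac{N}{2}(1+\chi)$, so $P_R=\frac{(1+\chi)^2}{4}N^2$, matching \eqref{eq:fully-oppo-los}. Finally, since $P_R^{\mathrm{Fully}}$ is an upper bound on the performance of any BD-RIS (group-connected is a special case of fully-connected, via \eqref{eq:PR-fully} and Cauchy–Schwarz), the bound is achieved, which completes the argument; and the crucial point — worth emphasizing — is that the pairing of \emph{opposite} polarizations is what makes each group's two norms both equal to $1+\chi$, so no power is "wasted": any same-polarization pairing would give groups with norm-squared $2$ and $2\chi$ alternately, and phase-alignment across such unbalanced groups cannot be done by a block-diagonal $\boldsymbol{\Theta}$, leaving the single-connected penalty in force.

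The main obstacle is not the arithmetic but justifying the claim that $\max_{\boldsymbol{\Theta}_g\in\mathcal{U}}|\mathbf{h}_{R,g}\boldsymbol{\Theta}_g\mathbf{h}_{T,g}|=\|\mathbf{h}_{R,g}\|\|\mathbf{h}_{T,g}\|$ with the maximizer lying in the set $\mathcal{U}$ of unitary \emph{symmetric} matrices (the physical constraint for a reciprocal lossless RIS), and simultaneously that the $N/2$ group contributions can be made \emph{co-phased} so their magnitudes add rather than partially cancel. Both facts follow from the fully-connected analysis in \cite{she22}: the optimal $\boldsymbol{\Theta}_g$ there is constructed as a rank-one-like alignment $\boldsymbol{\Theta}_g\propto \mathbf{h}_{T,g}^{*}\mathbf{h}_{R,g}^{*}/(\|\mathbf{h}_{R,g}\|\|\mathbf{h}_{T,g}\|)$ up to the symmetrization that preserves unitarity for $2\times2$ blocks, and a global scalar phase on each block (allowed within $\mathcal{U}$) co-phases the groups. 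I would therefore structure the proof as: (i) reduce to per-group optimization via block-diagonality; (ii) quote the $N=2$ fully-connected optimum and its realizability in $\mathcal{U}$; (iii) substitute the LoS opposite-polarization norms; (iv) sum and compare with \eqref{eq:fully-oppo-los}; (v) note optimality is automatic since $P_R^{\mathrm{Fully}}$ upper-bounds all BD-RIS.
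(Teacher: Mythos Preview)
Your proposal is correct and follows essentially the same route as the paper: both invoke the group-connected received power formula $P_R^{\mathrm{Group2}}=\bigl(\sum_{g}\Vert\mathbf{h}_{R,g}\Vert\,\Vert\mathbf{h}_{T,g}\Vert\bigr)^2$ from \cite{she22}, observe that opposite-polarization pairing under LoS makes $\Vert\mathbf{h}_{R,g}\Vert=\Vert\mathbf{h}_{T,g}\Vert=\sqrt{1+\chi}$ for every group, and sum to obtain $\frac{(1+\chi)^2}{4}N^2$. The paper simply quotes the per-group optimum directly, whereas you spell out the co-phasing and symmetric-unitary realizability concerns; this is extra justification of the same step rather than a different argument.
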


\begin{proof}
Following \cite{she22}, the received power achieved by a group connected RIS with group size 2 is given by
\begin{equation}
P_R^{\mathrm{Group2}}
=\left(\sum_{g=1}^{N/2}\left\Vert\mathbf{h}_{R,g}\right\Vert\left\Vert\mathbf{h}_{T,g}\right\Vert\right)^2,\label{eq:PR-group2}
\end{equation}
where $\mathbf{h}_{R,g}\in\mathbb{C}^{1\times 2}$ and $\mathbf{h}_{T,g}\in\mathbb{C}^{2\times 1}$ contain the two entries of $\mathbf{h}_{R}$ and $\mathbf{h}_{T}$, respectively, corresponding to the two RIS elements in the $g$th group.
Assume with no loss of generality that the \gls{tx} is vertically polarized and the \gls{rx} is horizontally polarized, as in Fig.~\ref{fig:system}.
Since each group contains two RIS elements with opposite polarization, $\mathbf{h}_{R,g}=[\sqrt{\chi}e^{j\varphi_{R,n_g}},e^{j\varphi_{R,N/2+{m_g}}}]$ and $\mathbf{h}_{T,g}=[e^{j\varphi_{T,n_g}},\sqrt{\chi}e^{j\varphi_{T,N/2+{m_g}}}]^T$, for some $n_g,m_g\in\{1,\ldots,N/2\}$.
Thus, we have $\Vert\mathbf{h}_{R,g}\Vert=\Vert\mathbf{h}_{T,g}\Vert=\sqrt{1+\chi}$, allowing us to rewrite \eqref{eq:PR-group2} as
\begin{equation}
P_R^{\mathrm{Group2}}
=\left(\sum_{g=1}^{N/2}\left(1+\chi\right)\right)^2
=\frac{\left(1+\chi\right)^2}{4}N^2,
\end{equation}
which is equal to \eqref{eq:fully-oppo-los}, and proves the proposition.
\end{proof}

Proposition~\ref{pro:1} implies that the performance upper bound $P_R^{\mathrm{Fully}}$ in \eqref{eq:fully-oppo-los} can be achieved with a BD-RIS architecture having circuit complexity $C=3N/2$ ($N$ impedance components connecting each of the $N$ RIS elements to ground, and an impedance component in each of the $N/2$ groups).
Remarkably, a complexity of $C=3N/2$ is considerably less than the complexity of the fully-connected RIS, being $C=N(N+1)/2$ \cite{she22}.
To bridge between the single-connected RIS ($C=N$) and the group-connected RIS with group size 2 ($C=3N/2$), the following proposition provides the maximum performance that can be achieved with a BD-RIS with complexity $C=N+n$, for $n=1,\ldots,N/2-1$, and characterizes the BD-RIS architecture achieving such performance, denoted as ``optimal''.

\begin{proposition}
The optimal BD-RIS architecture with $C=N+n$ tunable impedance components, for some $n\in\{1,\ldots,N/2-1\}$, has $n$ groups with size 2 each containing two RIS elements with opposite polarization and $N-2n$ groups with size 1.
This optimal BD-RIS architecture achieves a received power given by
\begin{equation}
P_R^{(n)}=\left(n\left(1+\chi\right)+\left(N-2n\right)\sqrt{\chi}\right)^2.
\end{equation}
\label{pro:2}
\end{proposition}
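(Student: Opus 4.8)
The plan is to reduce the optimization over BD-RIS architectures of complexity $C=N+n$ to one inequality applied per group. First I would fix notation: partition the $N$ elements into $K$ groups, with group $k$ holding $a_k$ vertically and $b_k$ horizontally polarized elements, so $g_k:=a_k+b_k$ and $\sum_k a_k=\sum_k b_k=N/2$. The natural extension of \eqref{eq:PR-group2} to arbitrary group sizes gives $P_R=\big(\sum_k\|\mathbf{h}_{R,k}\|\,\|\mathbf{h}_{T,k}\|\big)^2$, and for opposite-polarization LoS channels a vertically polarized element contributes $\chi$ to $\|\mathbf{h}_{R,k}\|^2$ and $1$ to $\|\mathbf{h}_{T,k}\|^2$ (and conversely for a horizontally polarized one), so $\|\mathbf{h}_{R,k}\|^2=a_k\chi+b_k$ and $\|\mathbf{h}_{T,k}\|^2=a_k+b_k\chi$, i.e.
\[
P_R=\Bigg(\sum_{k=1}^{K}\sqrt{(a_k\chi+b_k)(a_k+b_k\chi)}\Bigg)^{\!2}.
\]
Since a group of size $g_k$ costs $g_k(g_k+1)/2$ impedance components, the constraint $C=N+n$ is equivalent to $\sum_k\binom{g_k}{2}=n$.

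The crux is the per-group bound, which I would establish for every $g_k\ge1$:
\[
\sqrt{(a_k\chi+b_k)(a_k+b_k\chi)}\;\le\;g_k\sqrt{\chi}+\binom{g_k}{2}(1-\sqrt{\chi})^2,
\]
with equality iff $g_k\le 2$ and the group is balanced ($a_k=b_k=1$ when $g_k=2$). For $g_k=1$ both sides equal $\sqrt{\chi}$. For $g_k\ge2$, AM--GM gives $\sqrt{(a_k\chi+b_k)(a_k+b_k\chi)}\le\tfrac12\big((a_k\chi+b_k)+(a_k+b_k\chi)\big)=\tfrac12 g_k(1+\chi)=g_k\sqrt{\chi}+\tfrac12 g_k(1-\sqrt{\chi})^2$, and $\tfrac12 g_k\le\binom{g_k}{2}$ for $g_k\ge2$; the equality analysis follows because AM--GM is tight only when $a_k\chi+b_k=a_k+b_k\chi$, i.e.\ $a_k=b_k$ (for $\chi<1$), and $\tfrac12 g_k=\binom{g_k}{2}$ only when $g_k=2$.

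Summing over $k$ and using $\sum_k g_k=N$ and $\sum_k\binom{g_k}{2}=n$ yields $\sum_k\sqrt{(a_k\chi+b_k)(a_k+b_k\chi)}\le N\sqrt{\chi}+n(1-\sqrt{\chi})^2=n(1+\chi)+(N-2n)\sqrt{\chi}$, hence $P_R\le P_R^{(n)}$ for any admissible architecture. To close, I would verify that the architecture in the statement attains this bound: it is feasible since $n\le N/2-1$, it has complexity $3n+(N-2n)=N+n$, and by the computation in the proof of Proposition~\ref{pro:1} each size-2 group contributes $1+\chi$, while each size-1 group contributes $\sqrt{\chi}$ (as $\|\mathbf{h}_{R,k}\|\,\|\mathbf{h}_{T,k}\|=\sqrt{\chi}$ for a single element of either polarization), so $P_R=\big(n(1+\chi)+(N-2n)\sqrt{\chi}\big)^2=P_R^{(n)}$; the equality conditions above identify this group structure as the (for $\chi<1$, unique) optimizer, justifying the term ``optimal''. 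I would add that the same bound holds over all BD-RIS, not only group-connected ones, since a circuit with $E=C-N$ mutual impedances has a coupling graph whose connected components of sizes $t_j$ satisfy $\sum_j(t_j-1)\le E$, each component is upper bounded by its fully-connected value, and the per-component inequality holds with $(t_j-1)$ in place of $\binom{t_j}{2}$ by the same AM--GM step.

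The step I expect to be the main obstacle is choosing the right ``budget'' bookkeeping, namely writing the complexity as $\sum_k\binom{g_k}{2}$ and matching it, group by group, to the coefficient $(1-\sqrt{\chi})^2$ in the per-group inequality, so that summation collapses everything to the single quantity $n$. Tracking the equality cases carefully is the other delicate point, as it is what upgrades ``some optimal architecture'' to the specific one claimed in the proposition.
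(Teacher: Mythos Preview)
Your argument is correct and follows a genuinely different route from the paper's. The paper first invokes \cite[Proposition~1]{ner23} to assert that an optimal BD-RIS of complexity $C=N+n$ must have exactly $N-n$ groups, deduces combinatorially that at least $N-2n$ of them are singletons, and then bounds the contribution $Q$ of the remaining $2n$ elements via the global Cauchy--Schwarz inequality $Q\le\Vert\hat{\mathbf h}_R\Vert\Vert\hat{\mathbf h}_T\Vert=\sqrt{(n_v\chi+n_h)(n_v+n_h\chi)}$, finally optimizing this over $n_h$ by calculus to obtain $Q\le n(1+\chi)$. Your approach replaces all of this with a single local inequality: the per-component AM--GM bound $\sqrt{(a_k\chi+b_k)(a_k+b_k\chi)}\le\tfrac{g_k}{2}(1+\chi)=g_k\sqrt{\chi}+\tfrac{g_k}{2}(1-\sqrt{\chi})^2$, together with the bookkeeping $\tfrac{g_k}{2}\le g_k-1$ for $g_k\ge2$ (and equality for singletons), so that summing against the edge budget $\sum_k(g_k-1)\le n$ collapses directly to $n(1+\chi)+(N-2n)\sqrt{\chi}$. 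This is more elementary and self-contained: it does not need the external reduction to forest-connected architectures, it avoids the calculus step, and its equality analysis immediately pins down the optimal structure (balanced pairs, $g_k=2$, trees) in one stroke. The paper's route, by contrast, leans on prior Pareto-frontier results and separates the ``how many groups'' question from the ``which polarizations'' question, which is perhaps more modular but longer. One cosmetic point: your main narrative uses the fully-connected-group budget $\sum_k\binom{g_k}{2}=n$, which is not the relevant constraint for the proposition (general BD-RIS); the actual proof is your last paragraph with $\sum_j(t_j-1)\le n$, so you may want to lead with that formulation rather than appending it as an afterthought.
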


\begin{proof}
An optimal BD-RIS architecture with $N$ elements and complexity $C=N+n$, for some $n\in\{1,\ldots,N/2-1\}$, has $G=2N-C=N-n$ groups, according to \cite[Proposition~1]{ner23}.
Thus, such a BD-RIS achieves a received power
\begin{equation}
P_R^{(n)}
=\left(\sum_{g=1}^{N-n}\left\Vert\mathbf{h}_{R,g}\right\Vert\left\Vert\mathbf{h}_{T,g}\right\Vert\right)^2,\label{eq:PR-n1}
\end{equation}
where $\mathbf{h}_{R,g}\in\mathbb{C}^{1\times N_g}$ and $\mathbf{h}_{T,g}\in\mathbb{C}^{N_g\times 1}$ contain the $N_g$ entries of $\mathbf{h}_{R}$ and $\mathbf{h}_{T}$, respectively, corresponding to the $N_g$ RIS elements in the $g$th group, and $N_g$ denotes the group size of the $g$th group.
Since the sum of the sizes of the $N-n$ groups must be $N$, i.e., $\sum_{g=1}^{N-n}N_g=N$, we have at least $N-2n$ groups with size 1.
This can be proved by contradiction: assuming fewer than $N-2n$ values $N_g$ are 1, specifically $N-2n-k$ with $k\geq1$, the remaining $n+k$ values $N_g$ must sum to $2n+k$, which is impossible because the minimum sum of $n+k$ natural numbers greater than 1 is $2n+2k$.
Since there are at least $N-2n$ groups with size 1, we can rewrite \eqref{eq:PR-n1} as
\begin{equation}
P_R^{(n)}
=\left(\sum_{g=1}^{n}\left\Vert\mathbf{h}_{R,g}\right\Vert\left\Vert\mathbf{h}_{T,g}\right\Vert+\sum_{g=n+1}^{N-n}\left\vert h_{R,g}h_{T,g}\right\vert\right)^2,\label{eq:PR-n2}
\end{equation}
where we assumed with no loss of generality that $N-2n$ groups with size 1 are indexed $g=n+1,\ldots,N-n$.
By noticing that $\vert h_{R,g}h_{T,g}\vert=\sqrt{\chi}$, for $g=n+1,\ldots,N-n$, since \gls{tx} and \gls{rx} have opposite polarization, \eqref{eq:PR-n2} simplifies as
\begin{equation}
P_R^{(n)}
=\left(\sum_{g=1}^{n}\left\Vert\mathbf{h}_{R,g}\right\Vert\left\Vert\mathbf{h}_{T,g}\right\Vert+\left(N-2n\right)\sqrt{\chi}\right)^2,\label{eq:PR-n3}
\end{equation}
and the problem of maximizing $P_R^{(n)}$ becomes equivalent to maximizing the term
\begin{equation}
Q=\sum_{g=1}^{n}\left\Vert\mathbf{h}_{R,g}\right\Vert\left\Vert\mathbf{h}_{T,g}\right\Vert,\label{eq:Q}
\end{equation}
by designing the polarizations of the $2n$ RIS elements considered in the term $Q$ and how to divide them into $n$ groups.
Specifically, among the $2n$ RIS elements in $Q$, $n_v$ have vertical polarization and $n_h$ have horizontal polarization, such that $n_v+n_h=2n$, and $n_v$ and $n_h$ need to be optimized to maximize $Q$.
In the following, we maximize $Q$ by assuming that $n_v=n_h=n$, and we show afterward that this is the optimal design of the polarizations.

Assuming that $n_v=n_h=n$, the term $Q$ is maximized when the $n$ groups have size 2 and each contains two RIS elements with opposite polarization, following Proposition~\ref{pro:1}.
In this case, we have
\begin{equation}
Q=\sqrt{\frac{\left(1+\chi\right)^2}{4}\left(2n\right)^2}=n\left(1+\chi\right),\label{eq:Q2}
\end{equation}
because of Proposition~\ref{pro:1}.
Thus, by substituting \eqref{eq:Q2} into \eqref{eq:PR-n3}, we obtain that the optimal received power is given by
\begin{equation}
P_R^{(n)}
=\left(n\left(1+\chi\right)+\left(N-2n\right)\sqrt{\chi}\right)^2,\label{eq:PR-n4}
\end{equation}
which proves the proposition if $n_v=n_h=n$ is optimal.

\begin{figure}[t]
\centering
\includegraphics[width=0.36\textwidth]{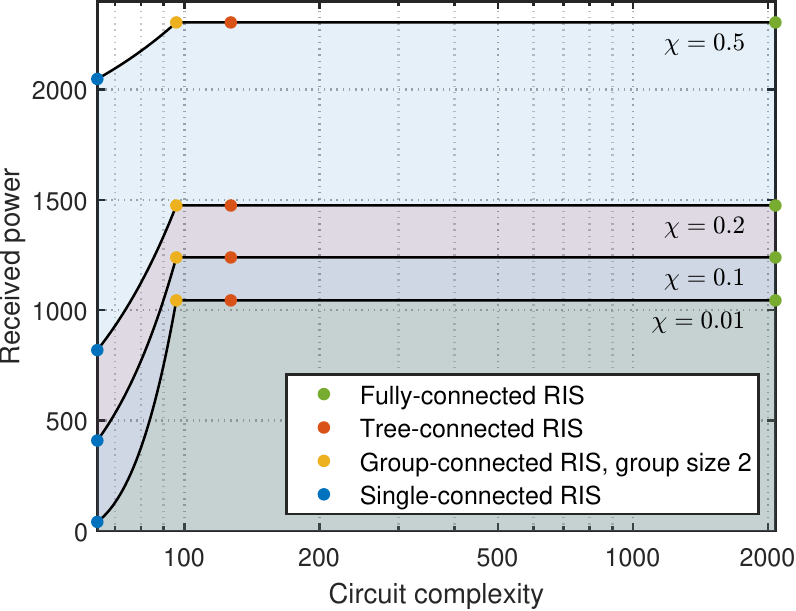}
\caption{Pareto frontier of the performance-complexity trade-off achieved by dual-polarized BD-RISs, with $N=64$.}
\label{fig:Pareto}
\end{figure}

To prove that $n_v=n_h=n$ is optimal to maximize $Q$, we upper bound $Q$ by
\begin{equation}
Q\leq\left\Vert\hat{\mathbf{h}}_{R}\right\Vert\left\Vert\hat{\mathbf{h}}_{T}\right\Vert,\label{eq:Q3}
\end{equation}
where $\hat{\mathbf{h}}_{R}\in\mathbb{C}^{1\times2n}$ and $\hat{\mathbf{h}}_{T}\in\mathbb{C}^{2n\times1}$ are introduced as $\hat{\mathbf{h}}_{R}=[\mathbf{h}_{R,1},\ldots,\mathbf{h}_{R,n}]$ and $\hat{\mathbf{h}}_{T}=[\mathbf{h}_{T,1}^T,\ldots,\mathbf{h}_{T,n}^T]^T$, which follows from the Cauchy–Schwarz inequality \cite{she22}.
Assuming with no loss of generality that the \gls{tx} is vertically polarized and the \gls{rx} is horizontally polarized, we have $\Vert\hat{\mathbf{h}}_{R}\Vert=\sqrt{n_v\chi+n_h}$ and $\Vert\hat{\mathbf{h}}_{R}\Vert=\sqrt{n_v+n_h\chi}$, and we can express \eqref{eq:Q3} as
\begin{align}
Q&\leq\sqrt{\left(n_v\chi+n_h\right)\left(n_v+n_h\chi\right)}\\
&=\sqrt{-\left(1-\chi\right)^2n_{h}^2+2n\left(1-\chi\right)n_{h}+4n^2\chi},\label{eq:Q4}
\end{align}
where we used $n_v+n_h=2n$.
By taking the derivative of the upper bound in \eqref{eq:Q4}, we find that it is maximized when $n_h=n$, yielding a maximum value of
\begin{equation}
Q\leq\sqrt{n^2\left(1-\chi\right)^2+4n^2\chi}=n\left(1+\chi\right),
\end{equation}
which is the same as \eqref{eq:Q2} proving the optimality of the solution with $n_v=n_h=n$ and concluding the proof.
\end{proof}

In Fig.~\ref{fig:Pareto}, we report the Pareto frontier of the performance-complexity trade-off offered by BD-RIS, by fixing $N=64$, when \gls{tx} and \gls{rx} have opposite polarization and with \gls{los} channels, as given in Proposition~\ref{pro:2}.
We make the following two observations.
\textit{First}, the single-connected RIS is the simplest architecture achieving the lowest performance and the fully-connected RIS is the most complex architecture achieving the highest performance.
The two architectures give different performances with \gls{los} channels as long as $\chi\neq1$, and the performance gap increases as $\chi$ decreases.
Interestingly, the gain of BD-RIS over D-RIS under \gls{los} remained unexplored in previous work.
\textit{Second}, the group-connected RIS with group size 2, is the least complex architecture achieving the performance upper bound, i.e., the same performance as the fully-connected RIS.
Thus, fully-/tree-connected RISs and group-/forest-connected RISs with group sizes different from 2 achieve the performance upper bound, but with increased circuit complexity and the tree-connected RIS is not the least complex BD-RIS achieving maximum performance in this scenario.

\section{Conclusion}

We analyze the fundamental performance limits of dual-polarized BD-RIS, by deriving the scaling laws of the achievable received power and the gain offered over D-RIS.
We observe that BD-RIS offers important gains over D-RIS also with \gls{los} channels, especially when the inverse of the \gls{xpd} is small.
In addition, we derive the Pareto frontier of the performance-complexity trade-off achieved by dual-polarized BD-RISs, showing that the received power upper bound is achieved with a group-connected RIS with group size 2.
This new insight is crucial for guiding the future prototyping and deployment of low-complexity BD-RIS architectures.

\bibliographystyle{IEEEtran}
\bibliography{IEEEabrv,main}

\end{document}